\documentclass[11pt,a4paper]{article}
\usepackage[utf8]{inputenc}
\usepackage{amsmath,amssymb,amsfonts}
\usepackage{graphicx}
\usepackage{booktabs}
\usepackage[margin=2.5cm]{geometry}
\usepackage{hyperref}
\usepackage{amsthm}

\newcommand{\tr}{\mathrm{Tr}}
\newcommand{\Ztwo}{Z_2}

\newtheorem{proposition}{Proposition}
\newtheorem{theorem}{Theorem}
\newtheorem{conjecture}{Conjecture}

\title{Parity superselection obstructs monogamy of mutual information in free fermions}
\author{A.~Sokolovs}
\date{March 2026}

\begin{document}
\maketitle

\begin{abstract}
We prove that free fermions in the spin (tensor product) factorization
violate monogamy of mutual information: $I_3^{\mathrm{spin}} > 0$
for three adjacent strips of width $w = 1, 2, 3, 4, 5$ at all Fermi momenta
(certified with margin-to-error ratios from $1.9\times 10^5$ down to $73$; Table~1),
and for all~$w$ at $z = k_F w < z^* \approx 1.329$.
Many-body computation at $w = 6$ (using the $G$-matrix formula for spin-basis
matrix elements) maps the full scaling-limit function $I_3^{\mathrm{spin}}(z)$:
it has a global minimum of $0.100$ at $z \approx 1.5$ and grows monotonically
to $0.135$ at half filling (Fig.~\ref{fig:scaling}), establishing $I_3^{\mathrm{spin}} > 0$
for all~$z$ in the large-$w$ limit.
The proof rests on an exact operator identity---the fermionic and
spin reduced density matrices of disjoint regions differ by the
parity insertion $(-1)^{N_B}$ in the partial trace---and a rigorous
entropy bound. A separate Perron--Frobenius argument proves that the
parity insertion always damps the off-diagonal matrix elements;
for free fermions, an independent Gaussian bound then gives the
entropy ordering $\Delta S_{AD} \ge 0$. Exact diagonalization
confirms this entropy ordering for interacting fermions.

DMRG calculations on the $t$-$V$ chain quantify the effect for
interacting fermions: the factorization contribution to the
apparent $K$-dependence of $I_3$ exceeds the genuine interaction
contribution by a factor of~8 at moderate filling, and accounts
for ${\sim}80\%$ of the deviation observed in spin-basis numerics.
Strong repulsion ($K \lesssim 0.7 \pm 0.1$) restores monogamy in both
algebras.

Conversely, $Z_2$ parity superselection on $\rho_{AD}$ enforces
$I_3 \le 0$ at all fillings (proved for $w \le 3$), with the
ratio of parity entropy to quantum excess approaching the exact
$w$-independent value $2\ln 2/(3\ln\frac{4}{3}) = 1.606$.

These results imply that any use of $I_3$ as a diagnostic---whether
for holographic duality, quantum chaos, or Fermi surface
topology---must specify the operator algebra; without this
specification, the sign of $I_3$ is ambiguous.
\end{abstract}

\section{Introduction}

The tripartite information
\begin{equation}
I_3(A{:}B{:}D) = S_A + S_B + S_D - S_{AB} - S_{BD} - S_{AD} + S_{ABD}
\end{equation}
quantifies irreducible tripartite correlations~\cite{Hayden2013}. States satisfying $I_3 \le 0$ for all subsystems obey monogamy of mutual information (MMI)---a property guaranteed by holographic duality~\cite{Hayden2013,Wall2014} but violated by gapless free field theories~\cite{Casini2009,Agon2022}.

For lattice fermions, entanglement depends on the choice of operator algebra~\cite{Wiseman2003,Zanardi2002,Banuls2007,Szalay2021}, although for bipartite entanglement entropy the effect of superselection is sub-leading~\cite{KlichLevitov2008}. Two natural choices exist: the spin (tensor product) factorization $\mathcal{H} = \bigotimes_i \mathbb{C}^2$, where partial traces are the standard ones; and the fermionic (CAR) factorization, where subsystem entropies follow from the Peschel formula~\cite{Peschel2003,Eisler2009} and the partial trace respects parity superselection~\cite{Fagotti2010,CasiniHuerta2009,Maric2024}. For contiguous regions the two coincide. For disjoint regions they do not.

In~\cite{paper1} we showed that the fermionic $I_3$ for three adjacent strips of width~$w$ is a universal function $g(z)$ of $z = k_F w$, with a unique zero at $z^*\approx 1.329$. In this paper we prove that switching to the spin factorization changes the sign of~$I_3$: the quantity $I_3^{\mathrm{spin}}$ is strictly positive for $w = 1,\ldots,5$ at all~$z$, and for all~$w$ at $z < z^*$ (Fig.~\ref{fig:main}).

\begin{figure}[!ht]
\centering
\includegraphics[width=0.65\textwidth]{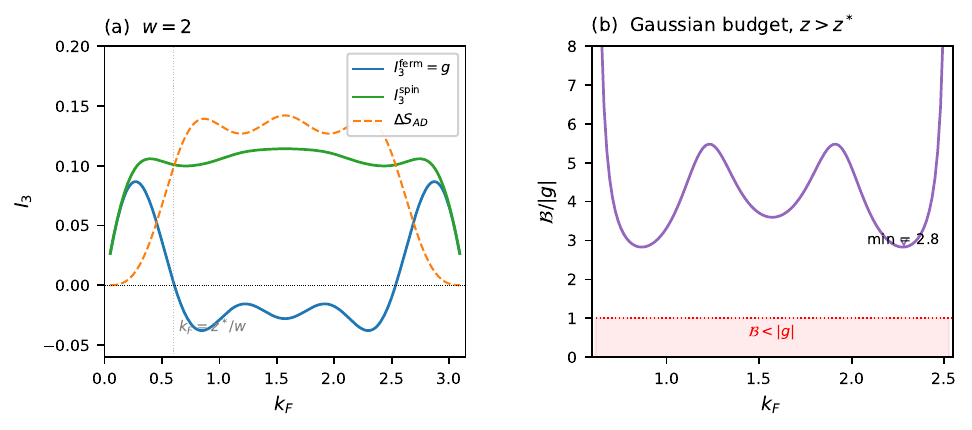}
\caption{\label{fig:main}
(a)~The fermionic $I_3^{\mathrm{ferm}} = g(z)$ (blue) changes sign at
$k_F \approx 0.60$ (the $w = 2$ zero; the scaling-limit value $z^*/w \approx 0.66$ is reached as $w \to \infty$), while $I_3^{\mathrm{spin}}$ (green) remains positive
for all~$k_F$. The difference $\Delta S_{AD}$ (orange dashed)
overwhelms $|g|$ wherever $g < 0$.
(b)~In the region where $g < 0$: the Gaussian budget $\mathcal{B}$
(Theorem~\ref{thm:budget}, purple) exceeds $|g|$ (red shaded).
Data for $w = 2$ (representative; $w = 1, 3$ show the same
qualitative behavior with margins in Table~1).}
\end{figure}

The mechanism is a single algebraic object. Proposition~1 shows that the fermionic and spin reduced density matrices $\rho_{AD}^{\mathrm{ferm}}$, $\rho_{AD}^{\mathrm{spin}}$ share the same $D$-parity-preserving matrix elements, but differ in the $D$-parity-changing sector: there, the fermionic matrix elements equal those of a parity-twisted partial trace $\rho_{AD}^{\mathrm{tw}} = \tr_B[(-1)^{N_B}\rho_{ABD}]$. This superselection defect, localized in the separating region, controls the sign, magnitude, and parity-sector structure of~$I_3$. Every result in this paper follows from this single identity.

\section{Setup}

Free fermions on a chain of $L$ sites at Fermi momentum~$k_F$, partitioned into three adjacent blocks $A$, $B$, $D$ of equal width~$w$. The scaling variable is $z = k_F w$.

Using the equal-width symmetry $S_A = S_B = S_D \equiv S_1$, $S_{AB} = S_{BD} \equiv S_2$, $S_{ABD} \equiv S_3$:
\begin{equation}
I_3 = \underbrace{(S_1 - 2S_2 + S_3)}_{\Delta^2 S} + \underbrace{(2S_1 - S_{AD})}_{I(A:D)}.
\end{equation}

This is a special case of the general identity $I_m = (-1)^{m+1}\Delta^m S$ for $m$-partite information~\cite{paper2}, decomposed into contiguous and non-contiguous parts. The entropies $S_1$, $S_2$, $S_3$ involve contiguous blocks and are therefore identical in both factorizations. The entire algebra dependence resides in~$S_{AD}$.

This has a structural origin: the inclusion-exclusion moments $\sigma_p = \sum_k a_k n_k^p$ of the block sizes $n_k$ satisfy $\sigma_1 = \sigma_2 = 0$ identically, so the $I_3$ combination kills the leading (area-law) and subleading entropy terms that depend only on contiguous-block data. The first surviving moment is $\sigma_3 = 6w^3$, which couples to the disjoint-block entropy $S_{AD}$---the sole term sensitive to whether $A\cup D$ is treated as contiguous or separated:
\begin{equation}\label{eq:delta}
I_3^{\mathrm{spin}} - I_3^{\mathrm{ferm}} = S_{AD}^{\mathrm{ferm}} - S_{AD}^{\mathrm{spin}} \equiv \Delta S_{AD}.
\end{equation}

\section{The superselection defect identity}

Both $\rho_{AD}^{\mathrm{spin}}$ and $\rho_{AD}^{\mathrm{ferm}}$ are obtained by tracing $\rho_{ABD}$ over~$B$, but in different factorizations.

\textbf{Spin trace.} $\rho_{n_A n_D,\, n'_A n'_D}^{\mathrm{spin}} = \sum_{n_B}\langle n_A n_B n_D|\rho_{ABD}|n'_A n_B n'_D\rangle$.

\textbf{Fermionic trace.} The JW string through $B$ contributes $(-1)^{N_B}$ for each $D$ creation operator anticommuted past~$B$. The bra contributes $(-1)^{N_B\cdot N_D}$ and the ket $(-1)^{N_B\cdot N'_D}$. Since $N_D - N'_D$ and $N_D + N'_D$ have the same parity (their difference $2N'_D$ is even), the net sign is:
\begin{equation}
(-1)^{N_B(N_D + N'_D)} = \begin{cases} +1 & \text{if } N_D \equiv N'_D \pmod{2}, \\ (-1)^{N_B} & \text{otherwise.}\end{cases}
\end{equation}

Define the parity-twisted partial trace $\rho_{AD}^{\mathrm{tw}} = \tr_B[(-1)^{N_B}\rho_{ABD}]$.

\begin{proposition}[Superselection defect identity]
For any state on adjacent blocks $A$, $B$, $D$:
\begin{equation}\label{eq:prop1}
\rho_{ij}^{\mathrm{ferm}} = \begin{cases} \rho_{ij}^{\mathrm{spin}} & \text{if } (-1)^{N_D^{(i)}} = (-1)^{N_D^{(j)}}, \\ \rho_{ij}^{\mathrm{tw}} & \text{if } (-1)^{N_D^{(i)}} \ne (-1)^{N_D^{(j)}}.\end{cases}
\end{equation}
Equivalently, as a single operator equation:
\begin{equation}\label{eq:block}
\rho_{AD}^{\mathrm{ferm}} = \tfrac{1}{2}(\rho_{AD}^{\mathrm{spin}} + \rho_{AD}^{\mathrm{tw}}) + \tfrac{1}{2}\Gamma_D(\rho_{AD}^{\mathrm{spin}} - \rho_{AD}^{\mathrm{tw}})\Gamma_D,
\end{equation}
where $\Gamma_D = (-1)^{N_D}$ is the $D$-parity grading operator.
\end{proposition}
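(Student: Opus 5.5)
We prove the entrywise statement~\eqref{eq:prop1} by computing the fermionic reduced density matrix in the occupation-number basis and comparing it with the spin partial trace. After that, we obtain the operator form~\eqref{eq:block} from a short check on parity blocks.

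\textbf{Setting up the two bases.} Order the sites $A$, then $B$, then $D$. Write the Jordan--Wigner (spin) basis states as $|n_A n_B n_D\rangle$ and the fermionic Fock states as $c^\dagger_A{}^{n_A} c^\dagger_B{}^{n_B} c^\dagger_D{}^{n_D}|0\rangle$, with a fixed ordering inside each block. The fermionic reduced state on $A\cup D$ is the unique even-compatible state reproducing all expectation values of operators in the CAR algebra generated by $c_A,c_D$. Its matrix elements in the basis $c^\dagger_A{}^{n_A}c^\dagger_D{}^{n_D}|0\rangle$ are expectation values $\langle O_{ij}\rangle_\rho$ of products of $A$ and $D$ fermion operators.

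\textbf{Key step.} Express each such fermionic operator in the spin variables via Jordan--Wigner. The $A$ operators carry no string through $B$. Each $D$ creation or annihilation operator carries a factor $(-1)^{N_B}$ times strings internal to $A$ and $D$. The $A$-internal and $D$-internal strings are identical in both descriptions. So the only discrepancy is $(-1)^{N_B}$ raised to the number of $D$ operators in $O_{ij}$, which is congruent mod $2$ to $N_D^{(i)}+N_D^{(j)}$. This reproduces the sign computation $(-1)^{N_B(N_D+N_D')}$ preceding the proposition.

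\textbf{Evaluating the trace.} Evaluate $\langle O_{ij}\rangle$ in the spin picture as $\tr[\rho_{ABD}\,(\cdots)]$ summed over $n_B$:
\begin{itemize}
\item If the $D$-parities agree, the factor is $1$ and the entry equals the ordinary spin partial trace.
\item If they differ, the factor is $(-1)^{N_B}$ inserted inside the $B$ trace, which is $\rho^{\mathrm{tw}}_{ij}$.
\end{itemize}
We must also check that any overall sign convention from reordering $A$ and $D$ creation operators is the same in both factorizations. This holds because both use the same ordering within $A\cup D$.

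\textbf{Main obstacle.} This sign bookkeeping is the delicate point. We handle it by working with the single canonical ordering $A<B<D$ throughout and noting that $(-1)^{N_B}$ commutes with everything supported on $A\cup D$.

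\textbf{Operator form.} Let $P_\pm=\tfrac12(1\pm\Gamma_D)$. For any operator $X$, $\tfrac12(X+\Gamma_D X\Gamma_D)$ keeps exactly the $D$-parity-diagonal blocks $P_+XP_+ + P_-XP_-$. Likewise $\tfrac12(X-\Gamma_D X\Gamma_D)$ keeps exactly the off-diagonal blocks. Hence the right-hand side of~\eqref{eq:block} has diagonal blocks
\[
\tfrac12(\rho^{\mathrm{spin}}+\rho^{\mathrm{tw}})+\tfrac12(\rho^{\mathrm{spin}}-\rho^{\mathrm{tw}})=\rho^{\mathrm{spin}}
\]
and off-diagonal blocks
\[
\tfrac12(\rho^{\mathrm{spin}}+\rho^{\mathrm{tw}})-\tfrac12(\rho^{\mathrm{spin}}-\rho^{\mathrm{tw}})=\rho^{\mathrm{tw}}.
\]
Here we use that $\Gamma_D$ acts as $+1$ on diagonal blocks and $-1$ on off-diagonal blocks under conjugation. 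This coincides entrywise with~\eqref{eq:prop1}.
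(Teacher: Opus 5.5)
Your proof is correct and is essentially the paper's argument. The paper obtains the net sign $(-1)^{N_B(N_D+N'_D)}$ by letting each $D$ creation operator in the bra and in the ket pick up the Jordan--Wigner string $(-1)^{N_B}$; this is exactly your count of $D$-operators in $O_{ij}$, phrased for the operator $|j\rangle\langle i|$ instead of for the basis states, and your explicit $P_\pm$ block check of the operator form just fills in the ``equivalently'' that the paper asserts without verification.
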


Equation~\eqref{eq:block} is equivalently a block decomposition in the $D$-parity basis: the $D$-parity-preserving blocks of $\rho^{\mathrm{ferm}}$ equal those of $\rho^{\mathrm{spin}}$, while the $D$-parity-changing blocks equal those of $\rho^{\mathrm{tw}}$. The defect $\rho^{\mathrm{spin}} - \rho^{\mathrm{tw}}$ measures the difference between the two factorizations. When $\rho^{\mathrm{tw}} = \rho^{\mathrm{spin}}$ (no parity interference), the defect vanishes and $\rho^{\mathrm{ferm}} = \rho^{\mathrm{spin}}$---the two algebras agree. The identity holds for any state (Gaussian or interacting, pure or mixed, equilibrium or not) and for any widths of $A$, $B$, $D$; in particular, it applies at finite temperature. Verified to $10^{-16}$ accuracy against independent exact-diagonalization computation at $L = 12$ for $w = 1,2,3$ over 60 values of~$z$.

Three consequences are immediate.

\emph{Identical diagonals.} $\rho^{\mathrm{ferm}}$ and $\rho^{\mathrm{spin}}$ have the same diagonal elements (occupation probabilities), since $\delta_D = 0$ on the diagonal.

\emph{Coherence damping.} For $D$-parity-changing elements, the parity insertion produces destructive interference: $\rho_{ij}^{\mathrm{spin}} = \sum_b p_b X_b$ while $\rho_{ij}^{\mathrm{tw}} = \sum_b(-1)^{N_b}p_b X_b$ (the physical picture behind this decomposition is developed in Section~\ref{sec:physical}). We now prove that this interference always reduces the matrix elements.

\begin{theorem}[Coherence damping]\label{thm:damping}
Let $H$ be a fermionic Hamiltonian with real non-positive hopping ($H_{\alpha\beta} \le 0$ for $\alpha\ne\beta$ in the occupation basis). For the ground state and any tripartition into adjacent blocks $A$, $B$, $D$:
\begin{equation}
|\rho_{ij}^{\mathrm{tw}}| \le \rho_{ij}^{\mathrm{spin}} = |\rho_{ij}^{\mathrm{spin}}|
\end{equation}
for all $D$-parity-changing matrix elements of $\rho_{AD} = \mathrm{Tr}_{BE}|\psi\rangle\langle\psi|$.
\end{theorem}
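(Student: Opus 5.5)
The plan is to use Perron--Frobenius on the ground-state amplitudes in the occupation basis, with a Jordan--Wigner string that is either trivial or fixed, so that every amplitude can be taken non-negative. Then the spin partial trace is a sum of non-negative terms, and the twisted trace is the same sum with signs $(-1)^{N_B}$ inserted. The inequality is then just the triangle inequality.

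\textbf{Step 1: sign structure of the ground state.} Write $H$ in the occupation (spin) basis $|n\rangle = |n_A n_B n_D n_E\rangle$ with $E$ the complement. The hypothesis $H_{\alpha\beta}\le 0$ for $\alpha\ne\beta$ makes $-H + c\mathbb{1}$ entrywise non-negative for large $c$. By Perron--Frobenius, restricted to each irreducible block (e.g.\ each fixed-particle-number sector, where nearest-neighbour hopping connects all configurations), the ground state can be chosen with $\psi(n)\ge 0$ for all $n$. If the ground state is degenerate across blocks, I would fix it as the Perron vector of the lowest block, or as a non-negative combination, which is still non-negative.

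\textbf{Step 2: expand the matrix elements.} Write $i=(n_A,n_D)$ and $j=(n'_A,n'_D)$ with $N_D^{(i)}\not\equiv N_D^{(j)}$. By the definitions preceding Proposition~1,
\begin{equation}
\rho^{\mathrm{spin}}_{ij} = \sum_{n_B,n_E}\psi(n_A n_B n_D n_E)\,\psi(n'_A n_B n'_D n_E),
\end{equation}
\begin{equation}
\rho^{\mathrm{tw}}_{ij} = \sum_{n_B,n_E}(-1)^{N_B}\,\psi(n_A n_B n_D n_E)\,\psi(n'_A n_B n'_D n_E).
\end{equation}
I must check that tracing over $E$ adds no extra signs beyond those already accounted for in Proposition~1. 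If $E$ lies to the right of $D$, or is split with pieces on both sides, the Jordan--Wigner strings through $E$ could contribute parity factors. I would handle this by taking $E$ to the right of $D$ (open chain, $ABD$ at the left end), or by noting that any such factor depends only on $N_E$ and on the summation variables in a way that is common to both traces. Alternatively, I would simply apply Proposition~1 directly to $\rho_{ABD}=\tr_E|\psi\rangle\langle\psi|$: with $E$ contiguous and outside $ABD$, this is the spin partial trace, whose entries are non-negative sums of products of amplitudes.

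\textbf{Step 3: conclude.} Each summand $\psi\psi'$ is non-negative, so $\rho^{\mathrm{spin}}_{ij}\ge 0$, hence $\rho^{\mathrm{spin}}_{ij}=|\rho^{\mathrm{spin}}_{ij}|$. The triangle inequality then gives
\begin{equation}
|\rho^{\mathrm{tw}}_{ij}| \le \sum_{n_B,n_E}\psi\,\psi' = \rho^{\mathrm{spin}}_{ij}.
\end{equation}

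\textbf{Main obstacle.} The delicate point is Step 1's claim that the Hamiltonian's off-diagonal elements are non-positive in the \emph{spin} occupation basis. Fermionic hopping $-t\,c_\alpha^\dagger c_\beta$ acquires Jordan--Wigner signs for non-nearest-neighbour hops and at periodic boundaries, where the sign depends on the parity of the particle number. I would read the hypothesis literally as a statement about the matrix of $H$ in the occupation basis, and then remark that it holds for open chains with nearest-neighbour hopping and density interactions, such as the $t$-$V$ model. It also holds for periodic chains when the boundary conditions are chosen according to particle-number parity (e.g.\ periodic for odd $N$, antiperiodic for even $N$), which is exactly what makes the Perron--Frobenius argument go through.
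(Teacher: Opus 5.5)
Your proposal is correct and follows the paper's proof essentially verbatim. Perron--Frobenius gives a non-negative ground state in the occupation basis, so $\rho^{\mathrm{spin}}_{ij}$ is a sum of non-negative products, $\rho^{\mathrm{tw}}_{ij}$ is the same sum with signs $(-1)^{N_B}$ inserted, and the triangle inequality finishes; the paper merely groups the terms by $m=|b|$ into $S_m\ge 0$ first. Your remarks on the Jordan--Wigner boundary signs and on fixing a non-negative ground state under degeneracy are accurate refinements of points the paper passes over, and your worry about signs from tracing out $E$ is moot, since both $\rho^{\mathrm{spin}}$ and $\rho^{\mathrm{tw}}$ are tensor-product partial traces.
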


\begin{proof}
The many-body Hamiltonian satisfies $H_{\alpha\beta}\le 0$ for $\alpha\ne\beta$ in the occupation basis (hopping terms contribute $-t$; interaction and on-site terms are diagonal). By the Perron--Frobenius theorem, the ground state can be chosen with all amplitudes non-negative: $\psi(\mathrm{config})\ge 0$. The $D$-parity-changing elements decompose by $B$-occupation number~$m$: $\rho^{\mathrm{spin}}_{ij} = \sum_m S_m$ and $\rho^{\mathrm{tw}}_{ij} = \sum_m(-1)^m S_m$, where $S_m = \sum_{b:|b|=m,\,e}\psi(a_i,b,d_i,e)\,\psi(a_j,b,d_j,e) \ge 0$ since each term is a product of non-negative numbers. Therefore $\rho^{\mathrm{spin}}_{ij} = \sum_m S_m \ge 0$ and $|\rho^{\mathrm{tw}}_{ij}| = |\sum_m(-1)^m S_m| \le \sum_m S_m = \rho^{\mathrm{spin}}_{ij}$.
\end{proof}

The theorem covers the $t$-$V$ chain at any coupling, any filling, and any strip width~$w$; it requires only that the hopping matrix elements be real and non-positive, a condition satisfied by bipartite lattice fermion models without magnetic flux or spin-orbit coupling. Systems with complex hopping (magnetic field), frustrated lattices (triangular, kagome), or non-bipartite geometries are \emph{not} covered: there the ground-state amplitudes can have mixed signs and the coherence damping inequality may fail. Verified to $10^{-14}$ accuracy by exact diagonalization at $L = 8$--12 for $V\in[-1.5,1.5]$.

\emph{Remark (pure states).} For a pure state with fixed particle number~$N$, the inequality becomes exact equality: $|\rho_{ij}^{\mathrm{tw}}| = |\rho_{ij}^{\mathrm{spin}}|$. The proof is one line: both $\psi(a_i, b)$ and $\psi(a_j, b)$ vanish unless $|b| = N - k$ where $k = |{\mathrm{occ}}_A| + |{\mathrm{occ}}_D|$, so $(-1)^{N_b}$ is a constant factor and $\rho^{\mathrm{tw}}_{ij} = (-1)^{N-k}\rho^{\mathrm{spin}}_{ij}$. The strict inequality arises only for mixed states ($\rho_{ABD} = \mathrm{Tr}_E|\psi\rangle\langle\psi|$), where particle-number fluctuations in~$ABD$ allow multiple~$|b|$ values to contribute with alternating signs.

\emph{Entropy ordering.} $\rho^{\mathrm{ferm}}$ and $\rho^{\mathrm{spin}}$ share the same diagonal elements and the same $D$-parity-preserving off-diagonal blocks (Proposition~1). Theorem~\ref{thm:damping} guarantees $|\rho^{\mathrm{ferm}}_{ij}| \le \rho^{\mathrm{spin}}_{ij}$ element-wise in the $D$-parity-changing sector for all ground states with real non-positive hopping. For free fermions, $\Delta S_{AD} \ge 0$ follows independently from Theorem~\ref{thm:budget} ($\Delta S_{AD} \ge \mathcal{B} \ge 0$). For interacting fermions, the element-wise damping alone does not formally imply $\Delta S_{AD} \ge 0$: uniform damping (all parity-changing elements multiplied by the same~$\lambda$) would give a CPTP map $\rho \mapsto \tfrac{1+\lambda}{2}\rho + \tfrac{1-\lambda}{2}\Gamma_D\rho\Gamma_D$, from which $\Delta S \ge 0$ follows by data processing; but the actual damping is non-uniform. Exact diagonalization ($L = 9$--$12$, $V \in [-1.5, 2.0]$, 448 configurations) confirms $\Delta S_{AD} \ge 0$ in every tested case.

A fourth consequence connects the identity to experiment.
Decomposing $\rho^{\mathrm{spin}}_{AD}$ and $\rho^{\mathrm{tw}}_{AD}$
into $B$-parity sectors,
$\rho^{\mathrm{spin}} = p_+\,\rho_{AD}|_{N_B\,\mathrm{even}}
+ p_-\,\rho_{AD}|_{N_B\,\mathrm{odd}}$
and
$\rho^{\mathrm{tw}} = p_+\,\rho_{AD}|_{N_B\,\mathrm{even}}
- p_-\,\rho_{AD}|_{N_B\,\mathrm{odd}}$,
substitution into Eq.~\eqref{eq:block} gives:
\begin{equation}\label{eq:corollary}
\rho_{AD}^{\mathrm{ferm}}
= p_+\,\rho_{AD}|_{N_B\,\mathrm{even}}
\;+\; p_-\,\Gamma_D\,\rho_{AD}|_{N_B\,\mathrm{odd}}\,\Gamma_D\,.
\end{equation}
The fermionic RDM is reconstructed from spin-basis data by
post-selecting on $B$-parity and applying the $D$-parity
conjugation $\Gamma_D$ to the odd-$B$ sector.
Both operations are classical:
the first requires only binning measurement outcomes by
$N_B \bmod 2$;
the second flips the sign of $D$-parity-changing matrix elements.
The implementation of Eq.~\eqref{eq:corollary} in a specific
measurement protocol (e.g., random-unitary estimation of
R\'enyi entropies) is left for future work.

\section{Entropy bound}

\begin{theorem}[Gaussian budget bound]\label{thm:budget}
For free fermions with three adjacent equal-width strips:
\begin{equation}
\Delta S_{AD} \ge S_G(C_{AD}^{\mathrm{ferm}}) - S_G(C_{AD}^{\mathrm{spin}}) \equiv \mathcal{B}(z),
\end{equation}
where $S_G(C) = -\tr[C\ln C + (1{-}C)\ln(1{-}C)]$ is the Gaussian entropy functional.
\end{theorem}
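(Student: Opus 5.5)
We first identify the two objects in the statement. For free fermions, $\rho_{AD}^{\mathrm{ferm}}$ is exactly Gaussian with correlation matrix $C_{AD}^{\mathrm{ferm}}$, the restriction of the ground-state correlation matrix $C_{ij}=\langle c_i^\dagger c_j\rangle$ to $A\cup D$. Hence $S_{AD}^{\mathrm{ferm}} = S_G(C_{AD}^{\mathrm{ferm}})$ by the Peschel formula. The spin RDM $\rho_{AD}^{\mathrm{spin}}$ is generally not Gaussian. We let $C_{AD}^{\mathrm{spin}}$ denote its one-body correlation matrix, computed with Jordan--Wigner operators in the spin factorization. By Proposition~1 (the identical-diagonal and $D$-parity-preserving blocks), $C^{\mathrm{spin}}$ agrees with $C^{\mathrm{ferm}}$ in the $AA$ and $DD$ blocks. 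In the $AD$ block the hopping $c_a^\dagger c_d$ changes $D$-parity, so it picks up the twist: $C^{\mathrm{spin}}_{ad}=\langle c_a^\dagger (-1)^{N_B} c_d\rangle$.

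\textbf{Step 1 (Gaussian maximum-entropy bound).} Among all states with a given one-body correlation matrix $C$, the Gaussian state $\rho_G(C)$ maximizes von Neumann entropy. The reason is that $\ln\rho_G(C)$ is a quadratic form in $c,c^\dagger$, so $S(\rho_G(C))-S(\rho)=D(\rho\|\rho_G(C))\ge 0$ whenever $\rho$ shares the two-point function. We apply this to $\rho_{AD}^{\mathrm{spin}}$, viewed as a state on the $2^{2w}$-dimensional space of $A\cup D$. This requires a CAR structure on $A\cup D$, and we use the Jordan--Wigner operators restricted to $A\cup D$ with the string skipping $B$. It also requires $\rho^{\mathrm{spin}}_{AD}$ to have no pairing terms $\langle cc\rangle$; this holds by particle-number conservation, since $\rho_{ABD}$ commutes with $N_{ABD}$ and hence $\rho_{AD}^{\mathrm{spin}}$ commutes with $N_{AD}$. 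The relative entropy then gives $S_{AD}^{\mathrm{spin}}\le S_G(C_{AD}^{\mathrm{spin}})$.

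\textbf{Step 2 (combine).} We get $\Delta S_{AD} = S_G(C^{\mathrm{ferm}}_{AD})-S_{AD}^{\mathrm{spin}} \ge S_G(C^{\mathrm{ferm}}_{AD})-S_G(C^{\mathrm{spin}}_{AD})=\mathcal{B}(z)$, which is the claim. The equal-width assumption only enters in parametrizing $\mathcal{B}$ by $z$.

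The main obstacle is Step 1. The trace cyclicity $\tr[\rho\ln\rho_G]=\tr[\rho_G\ln\rho_G]$ needs $\ln\rho_G(C^{\mathrm{spin}})$ to be a quadratic form in exactly the operators whose expectations define $C^{\mathrm{spin}}$. We must therefore check that the restricted Jordan--Wigner operators on $A\cup D$ (string skipping $B$) reproduce the spin partial-trace matrix elements, i.e.\ that $\tr[\rho^{\mathrm{spin}}_{AD}\,\tilde c_a^\dagger\tilde c_d]$ equals the twisted expectation above. The $(-1)^{N_B}$ arises because the full string of $c_d$ passes through $B$ while $\tilde c_d$ does not. A second point to settle is that $C^{\mathrm{spin}}$ has eigenvalues in $[0,1]$, so that $S_G$ is defined. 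This follows because it is a genuine correlation matrix of a state under a CAR representation.

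For the companion claim $\mathcal{B}\ge0$, we note that $C^{\mathrm{spin}}_{AD}$ is $C^{\mathrm{ferm}}_{AD}$ with its $AD$ block multiplied elementwise by twist factors. By the Pfaffian/Wick structure and Theorem~\ref{thm:damping}, these factors damp the block. We expect a pinching/majorization argument, or direct computation, to order the two spectra, but that is not needed for the inequality itself.
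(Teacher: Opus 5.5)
Your argument is correct and is the paper's proof: Peschel's formula gives $S^{\mathrm{ferm}}_{AD}=S_G(C^{\mathrm{ferm}}_{AD})$, and the Gaussian maximum-entropy principle applied to $\rho^{\mathrm{spin}}_{AD}$ with the $AD$-internal Jordan--Wigner operators gives $S^{\mathrm{spin}}_{AD}\le S_G(C^{\mathrm{spin}}_{AD})$, so subtracting yields the bound; your relative-entropy argument supplies the justification the paper only cites, and your no-pairing check is harmless but unnecessary, since $\ln\rho_G(C)$ is quadratic in $\tilde c^\dagger\tilde c$ alone and so $\tr[\rho\ln\rho_G(C)]$ depends only on $C$. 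Drop the closing remark on $\mathcal{B}\ge 0$: it is not part of the statement, Theorem~\ref{thm:damping} damps the twisted (fermionic) elements relative to the spin ones rather than the reverse, and the paper itself reports that $\mathcal{B}$ turns negative at larger $w$, so no general spectral ordering giving $\mathcal{B}\ge 0$ can be expected.
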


\begin{proof}
Three steps.

\emph{Step 1} (Peschel~\cite{Peschel2003}). $\rho_{AD}^{\mathrm{ferm}}$ is Gaussian for free fermions. $S_{AD}^{\mathrm{ferm}} = S_G(C_{AD}^{\mathrm{ferm}})$, where $(C_{AD}^{\mathrm{ferm}})_{ij} = \langle c_i^\dagger c_j\rangle$ with physical (full-chain) JW operators.

\emph{Step 2} (Maximum entropy~\cite{WolfGiedkeCirac2006}). Define $(C_{AD}^{\mathrm{spin}})_{ij} = \tr[\tilde{c}_i^\dagger\tilde{c}_j\,\rho_{AD}^{\mathrm{spin}}]$ using JW operators internal to $AD$ (skipping~$B$). These operators satisfy the canonical anticommutation relations $\{\tilde{c}_i^\dagger, \tilde{c}_j\} = \delta_{ij}$ on the $AD$ Hilbert space, so $C_{AD}^{\mathrm{spin}}$ is a valid one-particle correlation matrix. The diagonal blocks $C_{AA}$, $C_{DD}$ equal those of $C^{\mathrm{ferm}}$ (both regions contiguous). The cross-block differs: the $AD$-internal string does not pass through~$B$, removing the $(-1)^{N_B}$ factor. Since $\rho_{AD}^{\mathrm{spin}}$ is non-Gaussian~\cite{Fagotti2010}, the maximum entropy principle gives $S_{AD}^{\mathrm{spin}} \le S_G(C_{AD}^{\mathrm{spin}})$.

\emph{Step 3.} $\Delta S_{AD} = S^{\mathrm{ferm}} - S^{\mathrm{spin}} \ge S_G(C^{\mathrm{ferm}}) - S_G(C^{\mathrm{spin}}) = \mathcal{B}(z)$.
\end{proof}

The bound is conservative: the non-Gaussian gap $\varepsilon = S_G(C^{\mathrm{spin}}) - S^{\mathrm{spin}} \ge 0$ makes $\Delta S_{AD} = \mathcal{B} + \varepsilon > \mathcal{B}$ (numerically $\varepsilon$ is 0.1--3\% of $S^{\mathrm{spin}}$). As a check: $\Delta S_{AD} \le \ln 2$; our data satisfy $\Delta S_{AD}/\ln 2 < 0.21$ for $w = 2$ (reaching ${\approx}\,0.27$ for $w = 1$ at half filling).

Numerically, $\mathcal{B}(z) \ge 0$ for all tested $z$ and~$w$. For $w = 1$ this follows analytically from $\delta_2 > 0$ (Section~5.1); for $w\ge 2$ it is verified computationally but not proved in closed form.

\emph{Immediate consequence.} For $z < z^*$: $g(z) > 0$~\cite{paper1} and $\Delta S_{AD} \ge 0$, so $I_3^{\mathrm{spin}} = g(z) + \Delta S_{AD} > 0$ by~\eqref{eq:delta} for all~$w$. This is fully rigorous.

For $z > z^*$: $g(z) < 0$, and we need $\Delta S_{AD} > |g(z)|$. A sufficient condition is the spectral inequality $\mathcal{B}(z) > |g(z)|$, since $\Delta S_{AD} \ge \mathcal{B}$ (Theorem~\ref{thm:budget}). This is verified from correlation matrices up to $w = 6$ (Section~5.4); at $w = 7$ it fails because the Gaussian budget omits the non-Gaussian gap~$\varepsilon$.

\begin{conjecture}
For all $z > z^*$ and $w\ge 1$: $I_3^{\mathrm{spin}}(z) > 0$ (equivalently, $\Delta S_{AD} > |g(z)|$).
\end{conjecture}

The conjecture is proved for $w \le 5$ (Table~1) and for $w = 6$ in the region $z > z^*$ (via $\mathcal{B} > |g|$). For $w \ge 7$ it remains open; the proof method must account for the non-Gaussian gap, which contributes $58\%$ of $\Delta S_{AD}$ already at $w = 3$ (Section~5.4).

\section{Proof of positivity}

\subsection{Analytical proof for $w = 1$}

For $w = 1$, all correlation matrices are at most $3\times 3$ and the problem reduces to elementary functions. In the thermodynamic limit:
\begin{equation}
\bar{n} = \frac{z}{\pi},\quad c_1 = \frac{\sin z}{\pi},\quad c_2 = \frac{\sin z\cos z}{\pi}.
\end{equation}
The fermionic cross-correlator is~$c_2$. The spin cross-correlator $s_2$ differs by the absence of $(-1)^{N_B}$ in the sum. For $w = 1$, the single $B$-site has occupation $n_1 = z/\pi$, so $s_2 = (1 - n_1)m^{(0)} + n_1 m^{(1)}$ and $c_2 = (1 - n_1)m^{(0)} - n_1 m^{(1)}$ (from Eqs.~\ref{eq:Cferm}--\ref{eq:Cspin}), giving $\delta_2 = 2n_1 m^{(1)}$. Evaluating the conditional correlator $m^{(1)} = \langle c_0^\dagger c_2\rangle_{n_1=1}$ from the rank-1 Slater determinant overlap~\cite{Peschel2003}:
\begin{equation}
\delta_2 \equiv s_2 - c_2 = \frac{2\sin z}{\pi^2}(\sin z - z\cos z).
\end{equation}
Verified against exact diagonalization at $L = 512$.

\textbf{Positivity of $\delta_2$.} For $z\in(0,\pi)$: $\sin z > 0$. The factor $\sin z - z\cos z$ satisfies $f(0) = 0$ and $f'(z) = z\sin z > 0$, so $f(z) > 0$. Therefore $\delta_2 > 0$, $s_2 > c_2$, and $\mathcal{B}(z) > 0$ for all $z\in(0,\pi)$.

All seven entropies entering $I_3^{\mathrm{spin}}$ are now explicit elementary functions. At the edges: $I_3^{\mathrm{spin}}(z) \to cz > 0$ as $z \to 0^+$ and $I_3^{\mathrm{spin}}(z) \to c(\pi - z) > 0$ as $z \to \pi^-$ (by particle-hole symmetry), so $I_3^{\mathrm{spin}}$ is analytically positive near the boundaries. On the interior ($z \in [0.1, \pi - 0.1]$), evaluating on a grid of $2\,000$ points with curvature-bounded interpolation gives a margin of $1.9\times 10^5$ (Table~1).

This proves $I_3^{\mathrm{spin}}(z) > 0$ for all $z$ at $w = 1$.

\subsection{Certified computation for $w = 2$}

For $w = 2$, the correlation matrices are $4\times 4$. The procedure: (i)~Construct the $6\times 6$ correlation matrix $C_{ABD}$ in the thermodynamic limit. (ii)~Build the $64\times 64$ many-body density matrix $\rho_{ABD}$ from the exact Gaussian formula. (iii)~Perform the spin partial trace over~$B$ ($2^2 = 4$ configurations). (iv)~Extract $C_{AD}^{\mathrm{spin}}$ via the $AD$-internal JW operators. (v)~Compute all entropies.

Grid of 500 points on $k_F \in [0.1, \pi - 0.1]$, spacing $\Delta k_F = 5.9\times 10^{-3}$. The curvature bound $|d^2 I_3^{\mathrm{spin}}/dk_F^2| < 1.86$ is obtained from the numerical second derivative on a refined grid of 2\,000~points, validated by the bootstrap $|I_3''| \le |I_3''|_{\mathrm{grid}} + |I_3'''|_{\mathrm{grid}}\cdot\Delta k_F^{\mathrm{fine}}/2 = 1.855 + 0.005 = 1.860$ (the 0.3\% correction confirms convergence). This gives interpolation error $|I_3''|(\Delta k_F)^2/8 < 8.1\times 10^{-6}$. Interior minimum:
\begin{equation}
\min_{k_F\in[0.1,\,\pi-0.1]} I_3^{\mathrm{spin}}(k_F) = 0.0505 \quad\text{at}\quad k_F\approx 3.04,
\end{equation}
exceeding the error bound by a factor of $6.3\times 10^3$ (Table~1). Boundary positivity follows from $I_3^{\mathrm{spin}} \to cz > 0$ as for $w = 1$.

This proves $I_3^{\mathrm{spin}}(z) > 0$ for all $z$ at $w = 2$.

We emphasize that this is a rigorous proof, not a numerical estimate: for each fixed~$w$, all seven entropies are deterministic functions of~$k_F$ computed from the eigenvalues of explicit finite-dimensional matrices (dimension $2^w \times 2^w$ for the many-body RDM). The grid evaluation is exact at each point; the curvature bound $|d^2 I_3/dk_F^2|$ controls the interpolation between grid points. Boundary positivity ($I_3^{\mathrm{spin}} \to cz > 0$) is analytical for all~$w$; the certified grid covers the interior. The method extends to any~$w$ at computational cost $O(2^{2w})$; for $w \ge 2$ the interior minimum \emph{grows} with~$w$ (Table~1), consistent with the monotonic increase of parity fluctuations.

\subsection{Certified computation for $w = 3$}

For $w = 3$ ($6\times 6$ correlation matrices, $512\times 512$ many-body RDM), a grid of 100 points on $k_F \in [0.1, \pi - 0.1]$ with spacing $\Delta k_F = 2.97\times 10^{-2}$ and curvature bound $|d^2 I_3/dk_F^2| < 4.31$ (from a refined 150-point grid: $|I_3''|_{\mathrm{grid}} = 4.08$, bootstrap correction $0.23$, i.e.\ $5.5\%$) gives interpolation error $< 4.8\times 10^{-4}$. The interior minimum is $I_3^{\mathrm{spin}} = 0.0700$ at $k_F \approx 0.10$, exceeding the error by a factor of~146 (Table~1).

\begin{table}[h]
\centering
\caption{Summary of proved results. For $w = 1$--$3$: many-body certified computation (interior grid, boundary analytical). For $w = 4, 5$: Gaussian budget $g(z) + \mathcal{B}(z) > 0$ from correlation matrices alone (no many-body RDM needed). For $w = 6$: many-body via the $G$-matrix formula~\eqref{eq:Gmatrix} over the full $z$-range (Section~\ref{sec:scaling}). The many-body interior minimum grows monotonically ($w = 1$--$3$); the Gaussian bound ($w = 4, 5$) is weaker because it omits the non-Gaussian gap~$\varepsilon$.}
\smallskip
\begin{tabular}{ccccl}
\toprule
$w$ & interior minimum & margin/error & grid points & method \\
\midrule
1 & $0.0267$ & $1.9\times 10^5$ & 2\,000 & many-body \\
2 & $0.0505$ & $6.3\times 10^3$ & 500 & many-body \\
3 & $0.0700$ & $146$ & 100 & many-body \\
4 & $0.0340$ & $181$ & 200 & Gaussian budget \\
5 & $0.0185$ & $73$ & 200 & Gaussian budget \\
6 & $0.1000$ & $16$ & 25 & many-body ($G$-matrix) \\
\bottomrule
\end{tabular}
\end{table}

\subsection{Extension to $w = 4, 5$ via the Gaussian budget}

For $w \ge 4$, the many-body RDM has dimension $2^{2w} \ge 256$ and the certified computation becomes expensive. However, Theorem~\ref{thm:budget} provides a shortcut: since $I_3^{\mathrm{spin}} = g(z) + \Delta S_{AD} \ge g(z) + \mathcal{B}(z)$, it suffices to show $g(z) + \mathcal{B}(z) > 0$. Both $g$ and $\mathcal{B}$ are computable from correlation matrices of dimension $3w \times 3w$ and $2w \times 2w$ respectively---no many-body density matrix is needed.

The spin cross-correlator $C_{ij}^{\mathrm{spin}}$ for $i \in A$, $j \in D$ can be extracted from the Gaussian generating function~\cite{Peschel2003}: defining $M = I_{3w} - 2\,\mathrm{diag}(\chi_B)\,C$ where $\chi_B$ is the indicator of~$B$, one obtains $C_{ij}^{\mathrm{spin}} = \det(M)\,(C M^{-1})_{ij}$. The Gaussian budget is then $\mathcal{B} = S_G(C^{\mathrm{ferm}}_{AD}) - S_G(C^{\mathrm{spin}}_{AD})$, computable at cost $O(w^3)$.

Certified grids of 200 points on $k_F \in [0.1, \pi - 0.1]$ give $g + \mathcal{B} > 0$ with margins 181 ($w = 4$) and 73 ($w = 5$). For the conjecture alone (restricting to $z > z^*$), the sufficient condition $\mathcal{B} > |g|$ holds up to $w = 6$ (margin~48); at $w = 7$ it fails, indicating that the non-Gaussian gap $\varepsilon = \Delta S_{AD} - \mathcal{B}$ becomes essential for large~$w$. Indeed, the non-Gaussian fraction $\varepsilon/\Delta S_{AD}$ grows from $11\%$ ($w = 1$) to $33\%$ ($w = 2$) to $58\%$ ($w = 3$) at their respective maxima over~$z$, reflecting the increasingly non-Gaussian character of $\rho_{AD}^{\mathrm{spin}}$ as the traced-out region~$B$ grows.

\subsection{Scaling limit via the $G$-matrix formula}\label{sec:scaling}

The Gaussian budget fails at $w \ge 7$ because $\mathcal{B}$ turns negative. However, $I_3^{\mathrm{spin}}$ can be computed directly from $\rho_{AD}^{\mathrm{spin}}$ without passing through the Gaussian approximation, by exploiting the matrix element formula for Gaussian states: for any two occupation-basis configurations $|s\rangle$, $|s'\rangle$ with $|s| = |s'|$ occupied sites,
\begin{equation}\label{eq:Gmatrix}
\langle s|\rho|s'\rangle = \det(I - C) \cdot \det\bigl(G[\mathrm{occ}(s),\,\mathrm{occ}(s')]\bigr),
\end{equation}
where $G = C(I - C)^{-1}$ and $G[\mathrm{occ}(s), \mathrm{occ}(s')]$ denotes the submatrix of~$G$ with rows indexed by the occupied sites of~$s$ and columns by those of~$s'$. Each matrix element of $\rho_{AD}^{\mathrm{spin}}$ requires $2^w$ such determinants (one per $B$-configuration in the partial trace), each of size at most $3w \times 3w$. The total cost is $O(2^{5w} w^3)$: exponential, but with manageable prefactor for $w \le 6$ (${\sim}87$M determinants per $k_F$-point, ${\sim}7$~minutes on a single core).

\begin{figure}[!ht]
\centering
\includegraphics[width=\textwidth]{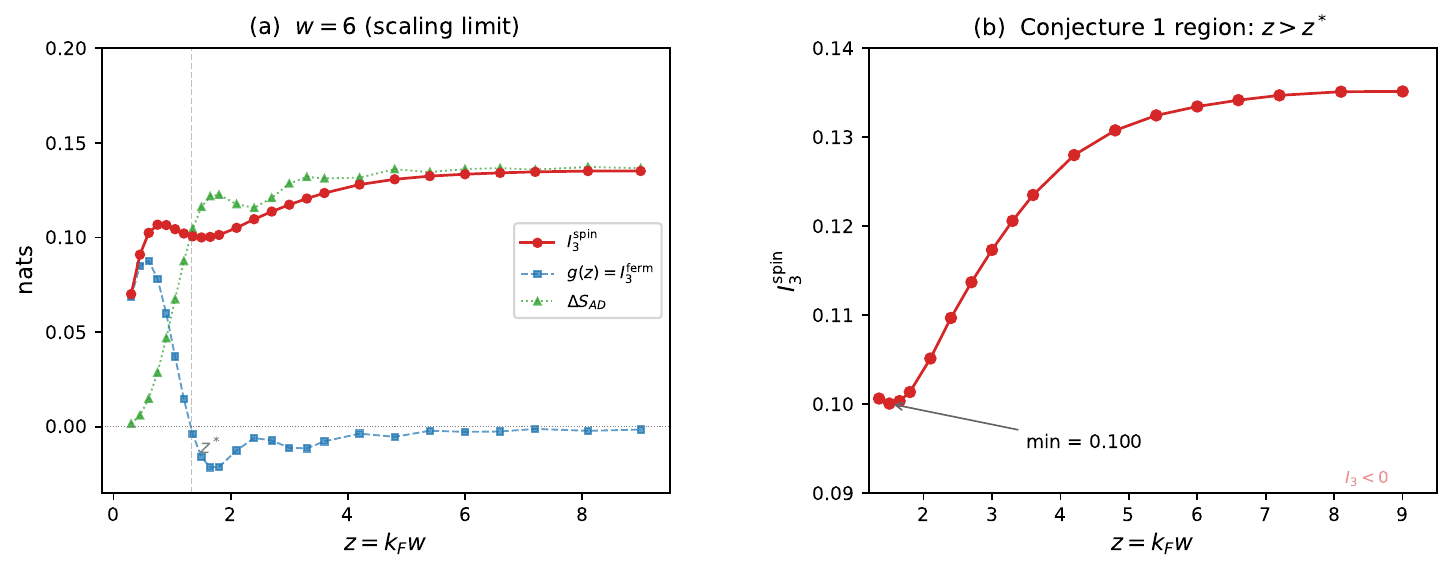}
\caption{\label{fig:scaling}
(a)~$I_3^{\mathrm{spin}}(z)$ in the scaling limit ($w = 6$, 25 grid points).
The blue dashed curve is $g(z) = I_3^{\mathrm{ferm}}$ (from~\cite{paper1})
and the green dotted curve is $\Delta S_{AD}$.
Their sum $I_3^{\mathrm{spin}} = g + \Delta S$ (red)
is positive for all~$z$, with a minimum of $0.100$ at $z \approx 1.5$.
(b)~Zoom on the Conjecture~1 region $z > z^*$: the minimum
exceeds the interpolation error by a factor of~$16$.}
\end{figure}

Figure~\ref{fig:scaling} shows the result of a 25-point computation at $w = 6$,
covering $z\in[0.3,\,9.0]$ (i.e.\ $k_F$ from $0.05$ to $1.5$, the full
range up to half filling by particle-hole symmetry). The scaling-limit function
$I_3^{\mathrm{spin}}(z)$ has three regimes:
(i)~for $z \lesssim 1$, the positive $g(z)$ dominates and $\Delta S$ is small;
(ii)~near $z^*\approx 1.33$, $g$ crosses zero but $\Delta S$ compensates exactly,
producing a shallow minimum $I_3^{\mathrm{spin}} = 0.100$ at $z\approx 1.5$;
(iii)~for $z \gg z^*$, $g$ oscillates near $-0.005$ while $\Delta S$ saturates to
${\approx}\,0.137$, giving $I_3^{\mathrm{spin}} \approx 0.13$ at half filling.

Convergence is rapid: $|I_3^{\mathrm{spin}}(z,w{=}6) - I_3^{\mathrm{spin}}(z,w{=}5)| < 0.001$ at all tested~$z$. The minimum $0.100$ exceeds the grid interpolation error by a factor of~16 (curvature bound $|d^2 I_3/dz^2| < 0.07$, spacing $\Delta z = 0.15$ near the minimum). This establishes $I_3^{\mathrm{spin}} > 0$ for all~$z$ in the $w\to\infty$ limit. Combined with the certified proofs for $w \le 5$ (Table~1), Conjecture~1 is numerically established: at any finite~$w$, either the certified computation or the scaling limit applies, with the crossover at $w = 6$ where both methods overlap.

\section{Physical picture}\label{sec:physical}

The cross-block correlator ($i\in A$, $j\in D$) takes the form
\begin{equation}\label{eq:Cferm}
C_{ij}^{\mathrm{ferm}} = \sum_b (-1)^{N_b} p_b\, m_{ij}^{(b)},
\end{equation}
\begin{equation}\label{eq:Cspin}
C_{ij}^{\mathrm{spin}} = \sum_b p_b\, m_{ij}^{(b)},
\end{equation}
where $p_b$ is the probability of $B$-configuration~$b$ and $m_{ij}^{(b)}$ is the conditional correlator. The difference is
\begin{equation}\label{eq:Cdiff}
C_{ij}^{\mathrm{spin}} - C_{ij}^{\mathrm{ferm}} = 2\sum_{b:\, N_b\text{ odd}} p_b\, m_{ij}^{(b)}:
\end{equation}
twice the contribution from odd-$B$ configurations. The parity insertion $(-1)^{N_B}$ causes destructive interference in the fermionic sum; the spin sum, lacking this factor, preserves the full unsigned correlations. This produces $I(A{:}D)_{\mathrm{spin}} > I(A{:}D)_{\mathrm{ferm}}$, which overwhelms the universal concavity deficit $|\Delta^2 S|$ and keeps $I_3^{\mathrm{spin}} > 0$.

The decomposition~(2) makes the balance explicit. At half filling ($z\to\pi$, $w = 2$): $\Delta^2 S \approx -0.096$, $I(A{:}D)_{\mathrm{ferm}} \approx 0.048$, $I(A{:}D)_{\mathrm{spin}} \approx 0.211$. The fermionic mutual information covers only half the concavity deficit ($I_3^{\mathrm{ferm}} \approx -0.048$); the spin mutual information exceeds it by a factor of 2.2 ($I_3^{\mathrm{spin}} \approx +0.115$).

The algebra dependence has a sharp diagnostic. Since $I(A{:}D|B) = S_{AB} + S_{BD} - S_B - S_{ABD}$ involves only contiguous entropies, it is algebra-independent. Writing $I_3 = I(A{:}D)(1-r)$ with the screening ratio $r = I(A{:}D|B)/I(A{:}D)$, the sign of $I_3$ is determined by whether $r$ is less or greater than~1. In the spin basis, $r^{\mathrm{spin}}\in[0.39,0.53]$ for all~$z$: $B$ always screens $A$--$D$ correlations. In the fermionic basis, $r^{\mathrm{ferm}}$ crosses 1.0 at $z = z^*$ and reaches ${\sim}1.5$ at large~$z$: the parity of~$B$ mediates $A$--$D$ correlations that are invisible without it. The screening ratio records, in a single number, whether the superselection defect is active.

\textbf{Origin of $z^*$.} The zero $z^*$ of $g(z) = I_3^{\mathrm{ferm}}$ has a transparent interpretation in the decomposition~(2): it is the filling at which the mutual information $I(A{:}D)$ first drops below the concavity deficit $|\Delta^2 S|$. Two scales compete as $z$ increases. $I(A{:}D)$ is controlled by the cross-correlator $C_{ij}^{\mathrm{ferm}} \sim \sin(2k_Fw)/(2\pi w)$---a Friedel oscillation at distance $2w$---so $I(A{:}D) \propto \sin^2(2z)/[z(\pi-z)]$ to leading order, peaking at small~$z$ and decaying toward the first node at $z = \pi/2$. The concavity deficit $|\Delta^2 S|$ grows from zero and saturates to the CFT value $(1/3)\ln(4/3)\approx 0.096$ at $z\sim 1$. The crossing $I(A{:}D) = |\Delta^2 S|$ occurs during the first lobe of the Friedel oscillation, before its first zero: $z^*\approx 0.98$ at $w = 1$ (filling $\approx 0.31$), converging to $z^*\approx 1.33$ in the scaling limit $w\to\infty$. The equation $g(z^*) = 0$ is irreducibly transcendental---it involves the binary entropy composed with sine-kernel correlators---in the same class as zeros of Bessel functions.

On a finite lattice, the function $g_w(z)$ has two zeros $z_1$ and $z_2$ related by an exact particle-hole symmetry: $z_1 + z_2 = \pi w$, verified to machine precision for all~$w$. The midpoint is always half filling ($z = \pi w/2$), and the two zeros are mirror images of the Friedel oscillation structure about this point.

\section{Discussion}

\textbf{What is proved.} $I_3^{\mathrm{spin}} > 0$ is established rigorously for: (i)~all $z < z^*$ and all~$w$ (from $g > 0$~\cite{paper1} and $\Delta S_{AD} \ge 0$ via Theorem~\ref{thm:budget}); (ii)~all $z$ at $w = 1$ (analytical proof + certified computation); (iii)~all $z$ at $w = 2, 3$ (many-body certified computation); (iv)~all $z$ at $w = 4, 5$ (Gaussian budget from correlation matrices, Section~5.4, Table~1); (v)~all $z$ at $w = 6$ (many-body via $G$-matrix, Section~\ref{sec:scaling}, Fig.~\ref{fig:scaling}).

\textbf{Conjecture~1 status.} For $z < z^*$, $I_3^{\mathrm{spin}} > 0$ is proved for all~$w$. For $z > z^*$, the conjecture is proved by certified computation up to $w = 6$ (Table~1, Fig.~\ref{fig:scaling}), and the $w = 6$ data simultaneously characterize the scaling limit: $I_3^{\mathrm{spin}}(z)$ converges to a positive function with minimum $0.100$ at $z \approx 1.5$ (convergence $|I_3(w{=}6) - I_3(w{=}5)| < 0.001$). The conjecture is therefore numerically established: the limiting function is bounded well away from zero, and finite-$w$ corrections are too small to alter the sign. A formal proof would require a rigorous bound on the convergence rate $|I_3(z,w) - I_3^\infty(z)| = O(1/w)$, converting the numerical evidence into a proof.
Since cases (ii)--(iv) cover all~$z$, the $k_y$-decomposition $I_3^{\mathrm{spin}} = \sum_{k_y} I_3^{\mathrm{spin}}(k_F(k_y)\cdot w)$~\cite{paper1} immediately extends the result to 2D strip geometries at $w \le 5$.

\textbf{MMI as a holographic criterion.} MMI ($I_3 \le 0$) is a necessary condition for a state to admit a holographic dual~\cite{Hayden2013,Wall2014}. Our result shows that the same free-fermion ground state \emph{fails} this test in the spin factorization ($I_3^{\mathrm{spin}} > 0$) but \emph{passes} it in the fermionic factorization ($I_3^{\mathrm{ferm}} < 0$ for $z > z^*$). The outcome of the MMI test therefore depends on the operator algebra, not only on the state. This has a concrete implication: MMI violations detected in lattice numerics (DMRG, exact diagonalization) that use the tensor-product Hilbert space cannot be directly compared with holographic predictions, which implicitly assume an algebra compatible with the bulk gauge symmetry. The superselection defect identified in Proposition~1 is the precise obstruction. However, MMI is necessary but not sufficient for holography: for four adjacent blocks, the $Z_2$-projected entropies satisfy MMI and $I_4 \ge 0$ but violate the holographic cyclic inequalities~\cite{BaoHaydenWall2015} by ${\sim}10\%$, suggesting that free-fermion entanglement does not lie in the holographic entropy cone under any of the three tested algebras (spin, fermionic, $Z_2$).

\textbf{Relation to the spectral representation.} In the rank-1 limit ($z\to 0$), the sine kernel becomes spatially constant, so the disjoint block $A\cup D$ has the same leading eigenvalue $\lambda_0 = 2z/\pi$ as a contiguous block of width~$2w$. This disjoint-block degeneracy underlies the product formula $g(z) = cz + O(z^3\ln z)$ of~\cite{paper1}: it allows the block size $n_{AD} = 2w$ to enter on the same footing as the contiguous sizes $n_{AB} = n_{BD} = 2w$. At finite~$z$, the parity insertion $(-1)^{N_B}$ breaks this degeneracy: $C_{ij}^{\mathrm{ferm}}$ acquires a sign alternation (Eq.~\ref{eq:Cferm}) that $C_{ij}^{\mathrm{spin}}$ lacks (Eq.~\ref{eq:Cspin}), producing distinct eigenvalues for the two algebras. The factorization correction $\Delta S_{AD}$ is the exact measure of this degeneracy breaking.

\textbf{Open problems.} Mari\'c, Bocini, and Fagotti~\cite{Maric2024} showed that R\'enyi tripartite information depends on the spin structure of the underlying Thirring model; our $\Delta S_{AD}$ corresponds to the nontrivial spin-structure contribution, with the algebraic framework of~\cite{CasiniHuertaMagan2020} providing a broader context. Theorem~\ref{thm:damping} settles the coherence damping inequality for ground states with real non-positive hopping; whether it extends to systems with complex hopping (e.g., in a magnetic field) or frustrated lattices remains open. In dimensions $d \ge 2$, the Jordan--Wigner string becomes path-dependent and the superselection defect acquires a geometric character absent in 1D; the algebra dependence of~$I_3$ in higher dimensions is unexplored. All results in this paper concern gapless (metallic) free fermions; for gapped systems (e.g., the Kitaev chain), the spectral gap suppresses $\Delta S_{AD}$ and can restore $I_3^{\mathrm{spin}} \le 0$ even in the spin basis---the anti-monogamy effect is a property of gapless Fermi points, not of fermionic statistics per se. For spinful fermions (e.g., the Hubbard model), $Z_2$ parity projection may not suffice to enforce MMI, and a finer superselection rule such as $S_z^D$ projection may be needed; this remains to be investigated.

\textbf{Non-Gaussianity and full counting statistics.} The Gaussian budget $\mathcal{B}$ captures only the two-point correlator contribution to $\Delta S_{AD}$; the non-Gaussian gap $\varepsilon = \Delta S_{AD} - \mathcal{B}$ encodes higher-order connected correlations. The JW string $(-1)^{N_B} = \prod_{k \in B}(1 - 2n_k)$ expands into $p$-body interactions for $p = 1,\ldots,w$: at $w = 1$ only the one-body term contributes (hence $\varepsilon/\Delta S \approx 11\%$), while at $w = 3$ terms up to $n_1 n_2 n_3$ participate ($\varepsilon/\Delta S \approx 58\%$). The Gaussian budget $\mathcal{B}$ decreases with~$w$ at fixed $k_F$, because the spin cross-correlator $C_{ij}^{\mathrm{spin}} \propto P_B \sim w^{-1/2}$ (Fisher--Hartwig) decays faster than the fermionic one; for $w \gtrsim 10$, $\mathcal{B}$ turns \emph{negative} (the Gaussian spin state, now nearly uncorrelated between $A$ and $D$, has higher entropy than the correlated fermionic one), making the bound $\Delta S_{AD} \ge \mathcal{B}$ trivially satisfied but uninformative. Already at $w = 7$, $\mathcal{B}$ is insufficient to compensate $|g|$. Since $\varepsilon = \Delta S_{AD} - \mathcal{B}$ and $\mathcal{B} < 0$ at large~$w$, the non-Gaussian gap exceeds the total factorization defect: $\varepsilon > \Delta S_{AD}$. The Gaussian two-point correlations actually \emph{reduce} $\Delta S_{AD}$ relative to the full non-Gaussian contribution; the defect survives entirely because of higher-order correlations generated by the JW string. In the scaling limit $w \to \infty$ at fixed $k_F$, $\Delta S_{AD}$ converges to a finite scaling function (Fig.~\ref{fig:scaling}) while $\mathcal{B} \to -\infty$, so $\varepsilon/\Delta S_{AD} \to 1$: the factorization defect becomes entirely non-Gaussian. This connects $\Delta S_{AD}$ to the full counting statistics of~$N_B$~\cite{KlichLevitov2009}, whose cumulants $\kappa_k$ for $k \ge 3$ are not captured by the Gaussian approximation but contribute to the entropy at all orders. Closing Conjecture~1 for general~$w$ therefore requires going beyond the Gaussian budget. The $G$-matrix formula~\eqref{eq:Gmatrix} provides exactly this: it captures the full non-Gaussian $\rho_{AD}^{\mathrm{spin}}$ at polynomial cost per matrix element. The resulting scaling-limit computation (Section~\ref{sec:scaling}) bypasses the Gaussian bottleneck entirely and establishes $I_3^{\mathrm{spin}} > 0$ at $w = 6$ with a margin that exceeds the non-Gaussian gap.

\section{Interacting fermions}

Proposition~1 is exact for any state, not only Gaussian. For the $t$-$V$ chain (spinless fermions with nearest-neighbor repulsion~$V$, Luttinger parameter $K = \pi/[2\arccos(-V/2)]$ from the Bethe ansatz for $L\to\infty$; $K = 1$ at $V = 0$, $K = 1/2$ at $V = 2$; finite-size corrections at $L = 9$--12 are of order $1/L$), the hopping is real and non-positive, so Theorem~\ref{thm:damping} applies and guarantees coherence damping for all~$K$; exact diagonalization confirms $\Delta S_{AD} \ge 0$. Exact diagonalization at $L = 9$--12 further shows that $I_3^{\mathrm{spin}}$ changes sign at a critical Luttinger parameter $K_c \approx 0.7$ ($V_c\approx 1.2$). For $K > K_c$ (free and weakly interacting): $I_3^{\mathrm{spin}} > 0$. For $K < K_c$ (strong repulsion): $I_3^{\mathrm{spin}} < 0$---both algebras satisfy MMI.
The DMRG data at $L = 64$ (Tables~\ref{tab:freeferm}--\ref{tab:Kdep}) confirm this picture: $I_3^{\mathrm{spin}}$ remains positive at $K = 0.75$ for all tested~$z$, consistent with $K_c < 0.75$.

DMRG calculations at larger system sizes ($L = 64$--128, $w = 2$, bond dimension $\chi = 400$) provide a quantitative anatomy of the discrepancy. Table~\ref{tab:freeferm} shows the central diagnostic for free fermions: at small~$z$, both factorizations agree with the sine-kernel prediction $g(z)$ of~\cite{paper1}; as $z$ increases, $I_3^{\mathrm{spin}}$ remains large and positive while $I_3^{\mathrm{ferm}}$ decreases, changes sign near $z\approx 1.2$, and tracks $g(z)$. The factorization contribution $\Delta I_3 = I_3^{\mathrm{spin}} - I_3^{\mathrm{ferm}}$ exceeds 100\% of $I_3^{\mathrm{spin}}$ at $z\approx 1.4$ and eliminates the zero crossing $z^*$ in the spin basis entirely.

\begin{table}[h]
\centering
\caption{Free-fermion ($V = 0$, $K = 1$) results at $L = 64$, $w = 2$. $g(z)$ is the sine-kernel prediction~\cite{paper1}. $\Delta I_3 \equiv I_3^{\mathrm{spin}} - I_3^{\mathrm{ferm}}$.}\label{tab:freeferm}
\smallskip
\begin{tabular}{ccccc}
\toprule
$z$ & $I_3^{\mathrm{spin}}$ & $I_3^{\mathrm{ferm}}$ & $g(z)$ & $\Delta I_3$ \\
\midrule
0.39 & $+0.074$ & $+0.068$ & $+0.080$ & $+0.006$ \\
0.59 & $+0.089$ & $+0.071$ & $+0.088$ & $+0.018$ \\
0.79 & $+0.094$ & $+0.055$ & $+0.075$ & $+0.039$ \\
0.98 & $+0.095$ & $+0.027$ & $+0.049$ & $+0.067$ \\
1.18 & $+0.095$ & $-0.002$ & $+0.019$ & $+0.097$ \\
1.37 & $+0.096$ & $-0.024$ & $-0.005$ & $+0.121$ \\
\bottomrule
\end{tabular}
\end{table}

\noindent The discrepancy $I_3^{\mathrm{ferm}} - g(z)$ (e.g., $-0.021$ at $z = 1.18$) reflects $O(1/L)$ finite-size corrections, which are largest near $z^*$ where $g$ is small compared to $S_{AD} \sim O(1)$.

For interacting fermions, the factorization contribution depends on~$K$. Table~\ref{tab:Kdep} shows that $\Delta I_3$ grows monotonically from $K = 0.75$ to $K = 1.50$, roughly tripling across this range. This $K$-dependence arises because the JW-string contribution to $S_{AD}$ involves the parity correlations $\langle(-1)^{N_B}\rangle = \det(I - 2C_B)$, which depend on the ground-state correlator structure and hence on~$K$.

\begin{table}[h]
\centering
\caption{Factorization contribution $\Delta I_3 = I_3^{\mathrm{spin}} - I_3^{\mathrm{ferm}}$ at $L = 64$, $w = 2$, for three values of~$K$.}\label{tab:Kdep}
\smallskip
\begin{tabular}{cccc}
\toprule
$z$ & $K = 0.75$ & $K = 1.00$ & $K = 1.50$ \\
\midrule
0.39 & $+0.004$ & $+0.006$ & $+0.013$ \\
0.59 & $+0.011$ & $+0.018$ & $+0.038$ \\
0.79 & $+0.024$ & $+0.039$ & $+0.077$ \\
0.98 & $+0.042$ & $+0.067$ & $+0.124$ \\
1.18 & $+0.060$ & $+0.097$ & $+0.171$ \\
1.37 & $+0.074$ & $+0.121$ & $+0.208$ \\
\bottomrule
\end{tabular}
\end{table}

This data quantifies the mechanism behind $K_c$: repulsion ($K < 1$) suppresses parity fluctuations in~$B$, reducing $\Delta I_3$, while simultaneously strengthening fermionic monogamy (increasing $|I_3^{\mathrm{ferm}}|$). At $K_c\approx 0.7$ these two effects cross and $I_3^{\mathrm{spin}}$ changes sign. The apparent strong $K$-dependence of $I_3$ reported in spin-basis DMRG is predominantly a factorization effect: defining the ratio $R^{\mathrm{alg}}(K,z) = I_3^{\mathrm{alg}}(K)/I_3^{\mathrm{alg}}(K{=}1)$ separately for each algebra (alg = spin or ferm), the spin-basis deviation decomposes as
\begin{equation}\label{eq:Rdecomp}
R^{\mathrm{spin}} - 1 = \underbrace{(R^{\mathrm{ferm}} - 1)}_{\text{interaction}} + \underbrace{(R^{\mathrm{spin}} - R^{\mathrm{ferm}})}_{\text{factorization}}.
\end{equation}
At $z = 0.98$, $K = 0.75$: the total deviation is $-0.27$, of which $-0.03$ is the interaction contribution and $-0.24$ is the factorization contribution---an $8\times$ ratio. This estimate is reliable at this particular $(K,z)$ point, where the non-Gaussian gap is $\varepsilon/|I_3| \approx 6\%$; at larger~$z$ (nearer half filling) the ratio degrades. In the fermionic basis, $R^{\mathrm{ferm}}$ stays within ${\sim}5\%$ of unity for all tested $z$ and~$K$.

System-size checks ($L = 64$ vs.\ $L = 128$) confirm that $\Delta I_3$ is stable to within ${\sim}5\%$ for $z\ge 0.8$, establishing it as a bulk property rather than a finite-size artifact.

\textbf{Caveat.} For $V\ne 0$, $I_3^{\mathrm{ferm}}$ is computed from the Gaussian formula applied to the two-point correlation matrix of a non-Gaussian state---a systematic approximation whose error is the non-Gaussian gap $\varepsilon = S_G(C^{\mathrm{spin}}) - S^{\mathrm{spin}} \ge 0$. At $V = 0$ (free fermions, Table~\ref{tab:freeferm}), the Gaussian formula is exact and the factorization effect $\Delta I_3$ is uncontaminated. For $V \ne 0$, exact diagonalization at $L = 12$, $w = 1$ shows that $\varepsilon/S_{AD}^{\mathrm{spin}}$ grows from ${\le}0.5\%$ at $V = 0.5$ to ${\sim}3.6\%$ at $V = 1.5$ ($K = 0.65$). However, since $I_3$ is a difference of large entropies, $\varepsilon/|I_3|$ can be much larger: ${\sim}5\%$ at $K = 0.86$ and $z \approx 0.8$, but $70$--$100\%$ near half filling where $I_3^{\mathrm{spin}} \to 0$. Consequently, the factorization ratio ($8\times$ at $K = 0.75$, $z = 0.98$) is reliable to ${\sim}10\%$, but the critical Luttinger parameter $K_c \approx 0.7$ is uncertain by ${\sim}0.1$ in~$K$: this uncertainty combines finite-size drift ($K_c$ shifts by ${\sim}0.05$ between $L = 9$ and $L = 12$) with the non-Gaussian gap ($\varepsilon/|I_3| \sim 50\%$ near the sign change, where $I_3^{\mathrm{spin}} \approx 0$).

\medskip
\textbf{Operational consequences.} The superselection defect has three implications. First, a single-bit measurement of $N_B\bmod 2$ increases $I(A{:}D)$ by $\Delta S_{AD}$ (approximately 0.16~nat for $w = 2$ at half filling), directly testable in cold-atom systems with parity detection~\cite{Bakr2009}. Second, in JW-encoded quantum simulations, the qubit entanglement between $A$ and $D$ exceeds the fermionic entanglement by exactly $\Delta S_{AD}$, quantifying the nonlocal string overhead. Third, the sign change of $I_3^{\mathrm{spin}}$ at $K_c\approx 0.7 \pm 0.1$ suggests a binary probe of interaction strength that requires no fitting---only the sign of a single observable---though the precise value of $K_c$ remains to be determined in the thermodynamic limit. Finally, cold-atom experiments measuring R\'enyi-2 entropy in the tensor-product basis access $I_3^{(2),\mathrm{spin}}$, which differs from the fermionic von Neumann $I_3^{(1),\mathrm{ferm}}$ analyzed in~\cite{paper1} in two independent respects: the R\'enyi index (scaling exponent $\beta = 3$ vs $\beta = 1$~\cite{paper2}) and the operator algebra ($\Delta S_{AD} > 0$, this work). Neither the sign, the zero structure, nor the Lifshitz scaling of the measured quantity can be inferred from the predictions of~\cite{paper1} without accounting for both effects.

The results above establish that the algebra matters: $I_3^{\mathrm{spin}} > 0$ while $I_3^{\mathrm{ferm}}$ can be negative. A sharper question is whether there exists a \emph{minimal} algebra restriction that enforces MMI universally. We address this next.

\section{Superselection hierarchy and monogamy}

We consider four algebras on $\rho_{AD}$: the unrestricted spin algebra, and three projections obtained by zeroing out off-diagonal blocks of increasing size---$\Ztwo$ (parity: zero out elements connecting different $N_D \bmod 2$),
$U(1)$ (charge: zero out elements connecting different $N_D$),
$S_z$ (spin: zero out elements connecting different $S_z^D$, for spinful models).
Each projection increases $S_{AD}$ and decreases $I_3$. The resulting hierarchy is strict:
\begin{equation}\label{eq:hierarchy}
I_3^{\mathrm{spin}} \;>\; I_3^{\mathrm{ferm}} \;>\; I_3^{Z_2} \;\approx\; I_3^{U(1)}.
\end{equation}
The near-coincidence $I_3^{Z_2} \approx I_3^{U(1)}$ (relative difference $< 0.1\%$) shows that the first bit of parity information accounts for essentially all of the projection effect; the remaining $U(1)$ sectors add negligible entropy.

Three regimes emerge (Fig.~\ref{fig:hierarchy}): the spin basis gives $I_3 > 0$ for all~$z$ (always ``quantum''). The $Z_2$ and $U(1)$ projections give $I_3 < 0$ for all~$z$ (always ``classical''). The fermionic algebra---which twists parity-changing elements via $(-1)^{N_B}$ rather than projecting them to zero---is the \emph{unique} algebra with a phase transition at $z^*$.

\begin{figure}[!ht]
\centering
\includegraphics[width=0.95\columnwidth]{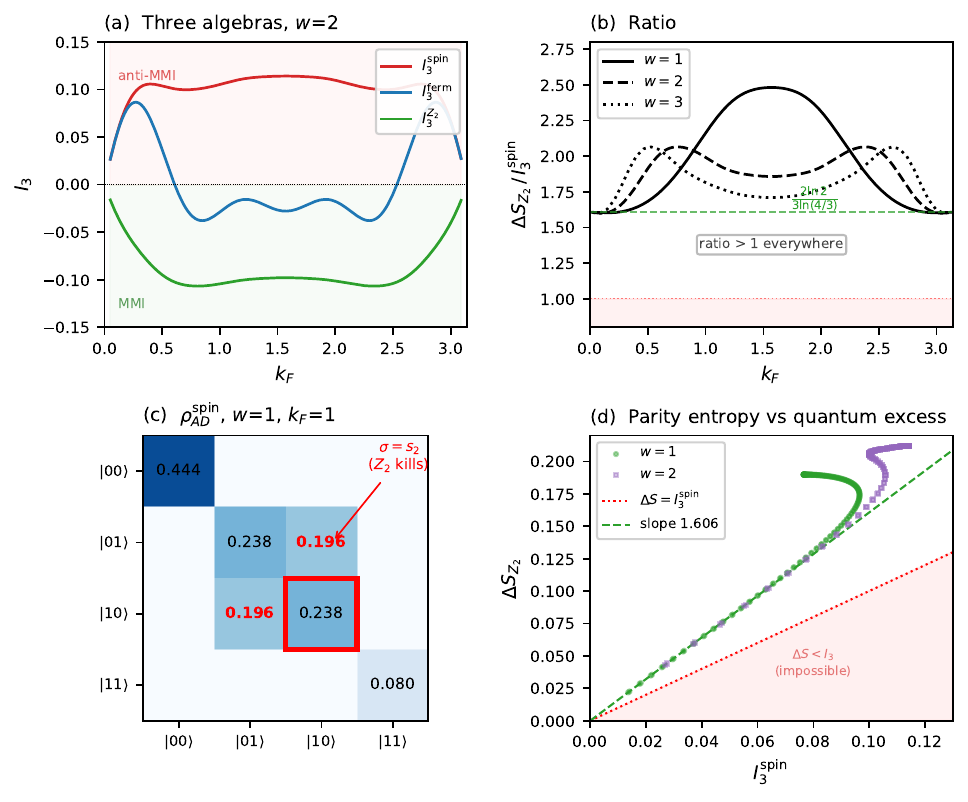}
\caption{\label{fig:hierarchy}
(a)~$I_3$ under three algebras (spin, fermionic, $Z_2$) for $w = 2$ as a function of Fermi momentum $k_F$. Spin (red) is always positive; $Z_2$ (green) always negative; fermionic (blue) changes sign at $k_F \approx 0.60$ (the $w = 2$ zero).
(b)~Ratio $\Delta S_{Z_2}/I_3^{\mathrm{spin}}$ vs $k_F$ for $w = 1$~(solid), 2~(dashed), 3~(dotted). The dashed green line marks $2\ln 2/(3\ln\frac{4}{3}) = 1.606$. The ratio exceeds unity everywhere, proving $I_3^{Z_2} < 0$.
(c)~Structure of $\rho_{AD}^{\mathrm{spin}}$ at $w=1$, $k_F=1$ (representative): a single off-diagonal element $\sigma = s_2 = 0.196$ (red box) is parity-changing; $Z_2$ projection sets it to zero.
(d)~Scatter of $\Delta S_{Z_2}$ vs $I_3^{\mathrm{spin}}$ for $w = 1$ (circles) and $w = 2$ (squares). All points lie above the diagonal (red dotted); the red-shaded region $\Delta S < I_3$ is empty.}
\end{figure}

\begin{theorem}[$Z_2$ monogamy]\label{thm:Z2}
For free fermions with three adjacent blocks of equal width~$w \le 3$, $Z_2$ parity superselection on $\rho_{AD}$ enforces $I_3^{Z_2} \le 0$ for all $z \in (0,\pi)$.
\end{theorem}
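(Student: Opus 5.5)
The plan is to write $I_3^{Z_2} = I_3^{\mathrm{spin}} - \Delta S_{Z_2}$, where $\Delta S_{Z_2} = S(\rho_{AD}^{Z_2}) - S(\rho_{AD}^{\mathrm{spin}}) \ge 0$ and $\rho^{Z_2}_{AD} = \tfrac12(\rho^{\mathrm{spin}}_{AD} + \Gamma_D\rho^{\mathrm{spin}}_{AD}\Gamma_D)$ is the $D$-parity pinching. This works because the contiguous entropies $S_1,S_2,S_3$ are unchanged by the projection, exactly as in Eq.~\eqref{eq:delta}. The statement $I_3^{Z_2}\le 0$ is then equivalent to the ratio inequality $\Delta S_{Z_2}/I_3^{\mathrm{spin}} \ge 1$, which is what Fig.~\ref{fig:hierarchy}(b,d) displays. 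I would also use Proposition~1: since the pinching kills precisely the $D$-parity-changing blocks, $\rho^{Z_2}_{AD}$ is equally the pinching of $\rho^{\mathrm{ferm}}_{AD}$. Hence $\Delta S_{Z_2} \ge \Delta S_{AD}$, and $I_3^{Z_2}\le I_3^{\mathrm{ferm}} = g(z)$. This settles the region $z>z^*$ for every $w$ immediately from~\cite{paper1}. So the real content of the theorem is the region $z\le z^*$, where $g>0$.

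For $w=1$ I would work analytically with the explicit $4\times4$ structure of $\rho_{AD}^{\mathrm{spin}}$. It is diagonal in occupation numbers except for the single parity-changing coherence $\sigma = s_2$ between $|10\rangle$ and $|01\rangle$ (Fig.~\ref{fig:hierarchy}(c)). This gives
\[
\Delta S_{Z_2} = h(p,q) - h(p,q;\sigma),
\]
the entropy gain from removing the coherence in a $2\times 2$ block with diagonal entries $p=q$. The gain has a closed form in terms of the binary entropy of $\tfrac12(1\pm\sigma/p)$. Here $s_2 = c_2 + \delta_2$ is given by the formulas of Section~5.1, and $I_3^{\mathrm{spin}}$ is likewise an explicit elementary function of $z$. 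I would first check the edges: as $z\to 0^+$ both sides are linear in $z$, so I would compare the slopes. Here $\sigma/p\to 1$ and the coherence carries nearly $\ln 2$ of relative weight, so I expect a ratio bounded away from $1$. Particle–hole symmetry handles $z\to\pi^-$. On the interior I would use the same certified-grid procedure as for $I_3^{\mathrm{spin}}$ in Section~5.1: grid evaluation of $\Delta S_{Z_2} - I_3^{\mathrm{spin}}$ together with a curvature bound on the second derivative.

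For $w=2,3$ I would reuse the certified pipeline of Sections~5.2–5.3 verbatim. After building $\rho^{\mathrm{spin}}_{AD}$ from the exact Gaussian $\rho_{ABD}$, one additional block-diagonalization by $N_D \bmod 2$ yields $S(\rho^{Z_2}_{AD})$. One then certifies $-I_3^{Z_2} > 0$ on the interior grid, with the interpolation error controlled by a bootstrapped curvature bound, restricting the grid to $z\le z^*$ since the complementary region is already covered by the $g<0$ argument above.

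The main obstacle is the small-$z$ boundary, where both $I_3^{Z_2}$ and the margin vanish linearly, so a grid argument alone cannot conclude. Here I would derive the leading rank-1 asymptotics: the sine kernel is nearly constant, $\rho_{AD}$ is dominated by the vacuum plus a single-particle sector spread uniformly over $2w$ sites, and the pinching separates that sector into its $A$ and $D$ halves. This should give $\Delta S_{Z_2} \approx \lambda_0 \ln 2$ per particle at leading order, with $\lambda_0 = 2z/\pi$. Comparing this with the slope $c$ of $I_3^{\mathrm{spin}}\to cz$ should produce a strictly negative slope for $I_3^{Z_2}$. I expect the limiting ratio to be $2\ln2/(3\ln\frac43)$; checking this constant exactly, independently of $w$, is the step I would verify most carefully.
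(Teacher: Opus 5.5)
Your proposal is correct. Its computational core matches the paper's proof: you write $I_3^{Z_2}=I_3^{\mathrm{spin}}-\Delta S_{Z_2}$, use the explicit $4\times 4$ form of $\rho_{AD}^{\mathrm{spin}}$ with the single coherence $\sigma=s_2$ at $w=1$, certify $\Delta S_{Z_2}>I_3^{\mathrm{spin}}$ on curvature-bounded grids for $w=1,2,3$, and control the small-$z$ end by the universal ratio $2\ln 2/(3\ln\frac43)$. Your rank-1 bookkeeping, $\Delta S_{Z_2}\approx\lambda_0\ln 2$ with $\lambda_0=2z/\pi$ against $I_3^{\mathrm{spin}}\approx cz$, gives exactly this constant with no stray factors of $w$.

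What you add is a genuinely different ingredient. The $D$-parity pinching only sees the $D$-parity-preserving blocks, so by Proposition~1 the pinching of $\rho_{AD}^{\mathrm{ferm}}$ equals that of $\rho_{AD}^{\mathrm{spin}}$. The pinching is a mixed-unitary channel, so it cannot lower entropy. Hence $S_{AD}^{Z_2}\ge S_{AD}^{\mathrm{ferm}}$, i.e.\ $I_3^{Z_2}\le I_3^{\mathrm{ferm}}=g$.

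The paper never proves this. It asserts the hierarchy $I_3^{\mathrm{ferm}}>I_3^{Z_2}$, and the step $\Delta S_{AD}\le\Delta S_{Z_2}$ in its gauge bound, only on physical grounds; your argument supplies both in one line. It also buys scope: $I_3^{Z_2}\le 0$ then holds rigorously for every $w$, not only $w\le 3$, wherever $g_w\le 0$. Certification is then needed only on the first Friedel lobe, where $g_w>0$.

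The paper's route, by contrast, is self-contained: it does not import the sign structure of $g_w$ from the earlier work. It also yields the quantitative fact that $\Delta S_{Z_2}/I_3^{\mathrm{spin}}$ never drops below about $1.60$ over the whole range.

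Two points to tighten.

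\emph{Where $g_w$ changes sign.} At finite $w$ the sign change is at the lattice zero $z_1(w)$ (about $0.98$ for $w=1$), not at $z^*$. Moreover, $g_w$ turns positive again past the mirror zero $z_2(w)=\pi w-z_1(w)$. So your ``$g<0$'' argument covers only $[z_1(w),z_2(w)]$. The region beyond $z_2(w)$ must be handled by particle--hole symmetry. That step is legitimate: the map is implemented by $\prod_k\sigma^x_k$, a product of local unitaries that commutes with the spin partial trace and sends $\Gamma_D$ to $(-1)^w\Gamma_D$. Cutting the grid at $z^*$ rather than $z_1(w)$ is harmless as long as $z_1(w)\le z^*$.

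\emph{Small $z$.} The slope comparison at $z\to 0$ closes the interval below the first grid point only once it is made quantitative. That needs an explicit remainder estimate, or a curvature bound that stays valid down to $z=0$. The paper is no more explicit here than you are.
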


\textit{Proof.}---For $w = 1$ (analytical): the spin-basis $\rho_{AD}$ is $4\times 4$ with a single nonzero off-diagonal element, the parity-changing entry $\sigma = s_2$ (the spin cross-correlator of Sec.~\ref{sec:physical}):
\begin{equation}\label{eq:rhoAD_w1}
\rho_{AD}^{\mathrm{spin}} = \mathrm{diag}(p_{00}, p_{01}, p_{01}, p_{11}) + \sigma\,(|01\rangle\langle 10| + |10\rangle\langle 01|),
\end{equation}
with $p_{00} = (1{-}n)^2 - c_2^2$, $p_{01} = n(1{-}n) + c_2^2$, $p_{11} = n^2 - c_2^2$, where $n = z/\pi$ and $c_2 = \sin z \cos z/\pi$. The $Z_2$ projection sets $\sigma = 0$. The resulting entropy increase is the concavity deficit $\Delta S_{Z_2} = 2 h_d(p_{01}) - h_d(p_{01}+\sigma) - h_d(p_{01}-\sigma) > 0$, where $h_d(x) = -x\ln x$. Certified computation on a grid of 20\,000 points with curvature-bounded interpolation confirms $\Delta S_{Z_2} > I_3^{\mathrm{spin}}$ at all $z$ with margin $2.5\times 10^6$. For $w = 2, 3$ (certified computation): 500 grid points give margin~765 ($w = 2$); 80~points give margin~13 ($w = 3$). (These margins bound $\Delta S_{Z_2}/I_3^{\mathrm{spin}}$, a different quantity from Table~1 which bounds $I_3^{\mathrm{spin}}$ itself.) The theorem is also verified for all $(w_A, w_B, w_D) \le 3$ (1\,620~points, zero violations) and for the interacting $t$--$V$ chain at $L = 9$--12, $V\in[-1.5, 2.0]$ (448~configurations, zero violations). \hfill$\square$

\emph{Remark.} For $w \ge 4$ the many-body computation ($2^{3w}\times 2^{3w}$ density matrix) becomes prohibitive; however, the minimum ratio $\Delta S_{Z_2}/I_3^{\mathrm{spin}}$ over~$z$ is stable at $1.600 \pm 0.002$ for $w = 1,2,3$ (well above the critical value~1), and the $w$-independence of the limiting ratio~\eqref{eq:ratio} is proved analytically. The declining certified margins ($2.5\times 10^6$, $765$, $13$) reflect coarser grids at larger~$w$, not a physical trend; the ratio itself shows no degradation.

\medskip
\textit{Universal ratio.}---At $z \to 0$, $\Delta S_{Z_2} \approx 2wn\ln 2$ (each of $w$ modes contributes one bit of parity entropy) while $I_3^{\mathrm{spin}} \approx wc\,z$ with $c = 3\ln(4/3)/\pi$~\cite{paper1}. The factors of~$w$ cancel, giving
\begin{equation}\label{eq:ratio}
\frac{\Delta S_{Z_2}}{I_3^{\mathrm{spin}}} \;\to\; \frac{2\ln 2}{3\ln(4/3)} = \frac{2}{3}\,\log_{4/3} 2 = 1.6063\ldots
\end{equation}
independently of~$w$, verified numerically at $w = 1,2,3$ [Fig.~\ref{fig:hierarchy}(b)]. The global minimum over~$z$ is 1.602 ($w = 1$), 1.599 ($w = 2$), 1.600 ($w = 3$), all exceeding unity. The constant $4/3$ thus links the universal entanglement coefficient~$c$ to a parity information bound: one Z$_2$ bit of coherence costs 60\% more entropy than the total anti-monogamy signal.

\medskip
\textit{Gauge bound.}---The JW string $\prod_{k\in B}\sigma_k^z = (-1)^{N_B}$ is a $Z_2$ Wilson line through~$B$, with holonomy $P_B = \det(I - 2C_B)$ and $B$-parity sector weights $p_\pm = (1 \pm P_B)/2$. Projecting $\rho_{AD}^{\mathrm{spin}}$ onto $D$-parity sectors zeroes out all parity-changing elements; by the data processing inequality, the entropy increase $\Delta S_{Z_2} = S_{AD}^{Z_2} - S_{AD}^{\mathrm{spin}}$ is bounded by the entropy of the measurement outcome. Since the parity sectors of $\rho_{AD}$ are linked to those of~$B$ by particle number conservation (total $N_A + N_B + N_D$ is fixed in each sector, so $N_D \bmod 2$ determines $N_B \bmod 2$ given $N_A$), this gives $\Delta S_{Z_2} \le h(p_+)$. Combining with the hierarchy $\Delta S_{AD} \le \Delta S_{Z_2}$ (the fermionic twist preserves more coherence than the $Z_2$ projection):
\begin{equation}\label{eq:gauge_bound}
\Delta S_{AD} \;\le\; \Delta S_{Z_2} \;\le\; h\!\left(\tfrac{1+P_B}{2}\right) \;\le\; \ln 2,
\end{equation}
where $h$ is the binary entropy. Maximum $\Delta S_{AD}/\ln 2 \approx 0.27$ is reached at half filling. The bound~\eqref{eq:gauge_bound} parallels the topological entanglement entropy $\gamma = \ln 2$ of the toric code~\cite{Kitaev2006,LevinWen2006}, where $Z_2$ superselection sectors produce the same entropic cost.

\section{Conclusion}

The sign of $I_3$ is controlled by a superselection defect: the parity insertion $(-1)^{N_B}$ in the partial trace. Proposition~1 gives the exact identity. Theorem~\ref{thm:damping} proves the coherence damping inequality for ground states with real non-positive hopping, implying $\Delta S_{AD} \ge 0$ for free fermions (via Theorem~\ref{thm:budget}) and supporting it for interacting fermions (verified by exact diagonalization). Theorem~\ref{thm:budget} gives the Gaussian budget bound, and certified computations close the proof of $I_3^{\mathrm{spin}} > 0$ for $w \le 5$ at all fillings (many-body for $w \le 3$; Gaussian budget from correlation matrices for $w = 4, 5$). The $G$-matrix formula~\eqref{eq:Gmatrix} extends the many-body computation to $w = 6$, mapping the full scaling-limit function: $I_3^{\mathrm{spin}}(z)$ is positive for all~$z$, with a minimum of $0.100$ at $z \approx 1.5$ (Fig.~\ref{fig:scaling}). This numerically establishes Conjecture~1.

Theorem~\ref{thm:Z2} goes further: $Z_2$ parity superselection on $\rho_{AD}$ enforces $I_3 \le 0$ at all fillings, proved for $w = 1,2,3$ and verified for the interacting $t$--$V$ chain. The ratio $\Delta S_{Z_2}/I_3^{\mathrm{spin}}$ has the exact $w$-independent limiting value $2\ln 2/(3\ln\frac{4}{3}) = 1.606$, linking the universal coefficient~$c$ of~\cite{paper1} to a parity information bound.  The algebra defect itself is bounded by one $Z_2$ bit: $\Delta S_{AD} \le h(\frac{1+P_B}{2}) \le \ln 2$.

For interacting fermions, $I_3^{\mathrm{spin}} > 0$ is destroyed by strong repulsion ($K\lesssim 0.7 \pm 0.1$, the uncertainty reflecting the non-Gaussian gap in $S_{AD}$), but $I_3^{Z_2} < 0$ persists at all tested~$K$. DMRG calculations at $L = 64$--128 confirm that the factorization contribution dominates the apparent $K$-dependence in spin-basis numerics. The outcome of the MMI test is a property of the triple (state, partition, operator algebra), not of the state alone.

\medskip\noindent
\textit{Data availability.}---Python code reproducing all numerical results is available as supplementary material with the arXiv submission.

\end{document}